\definecolor{my-red}{HTML}{C62828}
\definecolor{my-red-light}{HTML}{E57373}
\definecolor{my-red-verylight}{HTML}{FFCDD2}
\definecolor{my-red-dark}{HTML}{B71C1C}
\definecolor{my-pink}{HTML}{EC407A}
\definecolor{my-pink-light}{HTML}{F48FB1}
\definecolor{my-pink-verylight}{HTML}{F8BBD0}
\definecolor{my-pink-dark}{HTML}{880E4F}
\definecolor{my-purple}{HTML}{8E24AA}
\definecolor{my-purple-light}{HTML}{BA68C8}
\definecolor{my-purple-verylight}{HTML}{e5cefc}
\definecolor{my-purple-dark}{HTML}{6A1B9A}
\definecolor{my-indigo}{HTML}{3949AB}
\definecolor{my-indigo-light}{HTML}{7986CB}
\definecolor{my-indigo-verylight}{HTML}{9FA8DA}
\definecolor{my-indigo-dark}{HTML}{1A237E}
\definecolor{my-blue}{HTML}{1E88E5}
\definecolor{my-blue-light}{HTML}{64B5F6}
\definecolor{my-blue-verylight}{HTML}{B3E5FC}
\definecolor{my-blue-dark}{HTML}{0D47A1}
\definecolor{my-cyan}{HTML}{00BCD4}
\definecolor{my-cyan-light}{HTML}{4DD0E1}
\definecolor{my-cyan-verylight}{HTML}{80DEEA}
\definecolor{my-cyan-dark}{HTML}{0097A7}
\definecolor{my-teal}{HTML}{009688}
\definecolor{my-teal-light}{HTML}{4DB6AC}
\definecolor{my-teal-verylight}{HTML}{B2DFDB}
\definecolor{my-teal-dark}{HTML}{00695C}
\definecolor{my-green}{HTML}{39ac39}
\definecolor{my-green-light}{HTML}{8cd98c}
\definecolor{my-green-verylight}{HTML}{b3e6b3}
\definecolor{my-green-dark}{HTML}{339933}
\definecolor{my-grass}{HTML}{689F38}
\definecolor{my-grass-light}{HTML}{8BC34A}
\definecolor{my-grass-verylight}{HTML}{AED581}
\definecolor{my-grass-dark}{HTML}{33691E}
\definecolor{my-lime}{HTML}{CDDC39}
\definecolor{my-lime-light}{HTML}{DCE775}
\definecolor{my-lime-verylight}{HTML}{E6EE9C}
\definecolor{my-lime-dark}{HTML}{AFB42B}
\definecolor{my-yellow}{HTML}{fffc29}
\definecolor{my-yellow-light}{HTML}{fffd7a}
\definecolor{my-yellow-verylight}{HTML}{fefdbb}
\definecolor{my-yellow-dark}{HTML}{FFD600}
\definecolor{my-orange}{HTML}{FF8F00}
\definecolor{my-orange-light}{HTML}{FFC107}
\definecolor{my-orange-verylight}{HTML}{ffe5a4}
\definecolor{my-orange-dark}{HTML}{FF6F00}
\definecolor{my-brown}{HTML}{6D4C41}
\definecolor{my-brown-light}{HTML}{795548}
\definecolor{my-brown-verylight}{HTML}{BCAAA4}
\definecolor{my-brown-dark}{HTML}{3E2723}
\definecolor{my-gray}{HTML}{616161}
\definecolor{my-gray-light}{HTML}{9E9E9E}
\definecolor{my-gray-verylight}{HTML}{f0f0f0}
\definecolor{my-gray-dark}{HTML}{424242}
\definecolor{my-steel}{HTML}{546E7A}
\definecolor{my-steel-light}{HTML}{78909C}
\definecolor{my-steel-verylight}{HTML}{B0BEC5}
\definecolor{my-steel-dark}{HTML}{37474F}
\newcommand{\cmark}{{\color{my-green}\ding{51}}}
\newcommand{\xmark}{{\color{my-red}\ding{55}}}
\newtheorem{theorem}{Theorem}
\newtheorem{corollary}[theorem]{Corollary}
\newtheorem{proposition}[theorem]{Proposition}
\newtheorem{definition}{Definition}
\newtheorem{myex}{Example}
\newcommand{\tuple}[1]{\left\langle #1 \right\rangle}
\DeclareMathOperator*{\argmax}{argmax}
\DeclareMathOperator*{\argmin}{argmin}
\renewcommand{\phi}{\varphi}
\renewcommand{\vec}[1]{\boldsymbol{#1}}
\DeclareMathOperator{\symmdiff}{\triangle}
\newcommand{\complexP}{\ensuremath{\mathtt{P}}}
\newcommand{\complexNP}{\ensuremath{\mathtt{NP}}}
\newcommand{\agentSet}{\mathcal{N}}
\newcommand{\projSet}{\mathcal{P}}
\newcommand{\conceivProjSet}{\mathbb{P}}
\newcommand{\projSetAll}{\mathbb{P}}
\newcommand{\awareStruct}{\boldsymbol{\mathcal{C}}}
\newcommand{\allocSet}{{\normalfont\textsc{Feas}}}
\newcommand{\allocSetEx}{\allocSet_{\textsc{Ex}}}
\newcommand{\profile}{\boldsymbol{A}}
\newcommand{\shortProfile}{\boldsymbol{P}}
\newcommand{\idealSet}{\mathit{top}}
\newcommand{\topProfile}{\boldsymbol{top}}
\newcommand{\undom}{\mathit{undom}}
\newcommand{\pbRule}{{\normalfont\textsc{F}}}
\newcommand{\maxWelCard}{{\normalfont\textsc{MaxCard}}}
\newcommand{\maxWelCost}{{\normalfont\textsc{MaxCost}}}
\newcommand{\greedy}{{\normalfont\textsc{Greed}}}
\newcommand{\greWelCard}{{\normalfont\textsc{GreedCard}}}
\newcommand{\greWelCost}{{\normalfont\textsc{GreedCost}}}
\newcommand{\mes}{{\normalfont\textsc{MES}}}
\newcommand{\mesSat}[1]{\mes{}{\normalfont[#1]}}
\newcommand{\seqPhragmen}{{\normalfont\textsc{SeqPhrag}}}
\newcommand{\maximinSupport}{{\normalfont\textsc{MaximinSupp}}}
\newcommand{\tieRule}{\textsc{Untie}}
\newcommand{\canonTieRule}{\textsc{CanonUntie}}
\newcommand{\shortRule}{{\normalfont\textsc{Short}}}
\newcommand{\reprShortRule}{{\normalfont\textsc{ReprShort}}}
\newcommand{\nominationShortRule}{{\normalfont\textsc{NomShort}}}
\newcommand{\medianShortRule}{{\normalfont\textsc{MedianShort}}}
\newcommand{\satisfaction}{\mathit{sat}}
\newcommand{\cardSatisfaction}{\satisfaction^{\mathit{card}}}
\newcommand{\costSatisfaction}{\satisfaction^{\mathit{cost}}}
\title{Shortlisting Rules and Incentives in \\\ an End-to-End Model for \\ Participatory Budgeting\thanks{This paper is an extended version of the paper published in the proceedings of the International Joint Conference on Artificial Intelligence (IJCAI-21) \citep{REH21}.}}
\author{
	Simon Rey, Ulle Endriss \text{\normalfont and} Ronald de Hann\\[0.65em]
	\texttt{s.j.rey@uva.nl}, u.\texttt{endriss@uva.nl} \text{\normalfont and} \texttt{r.dehaan@uva.nl}
}
\date{Institute for Logic, Language and Computation (ILLC)\\[0.25em]University of Amsterdam}
\begin{document}
	\maketitle

	\begin{abstract}
		We introduce an end-to-end model for participatory budgeting
		grounded in social choice theory. 
		Our model accounts for the interplay between the two stages commonly encountered in real-life participatory budgeting. In the first stage participants propose projects to be shortlisted, while in the second stage they vote on which of the shortlisted projects should be funded. 
		Prior work of a formal nature has focused on analysing the second stage only.
		We introduce several shortlisting rules for the first stage 
		and analyse them in both normative and algorithmic terms. Our main focus is on the incentives of participants to engage in strategic behaviour during the first stage, in which they need to reason about how their proposals will impact the range of strategies available to everyone in the second stage.
	\end{abstract}
	
	\section{Introduction}
	
	Participatory budgeting (PB) is a loosely defined range of mechanisms designed to improve the involvement of ordinary citizens in public spending decisions \citep{Caba04, WNT21}. Though it is not easy to come up with a unified definition, a typical PB process is roughly organised around the following three stages \citep{Wamp00, Caba04, Shah07}:
	\begin{itemize}
		\item Recovering all the projects that can be potentially implemented through the PB process, and selecting a shortlist of them that will carry over to the next step;
		\item Collecting the opinions of the participants regarding the shortlisted projects in order to determine the ones that will actually be implemented;
		\item Monitoring the actual implementation of the projects and reporting to the participants about it.
	\end{itemize}
	From a social choice perspective, the second stage is the most exciting---how to collect and aggregate opinions is actually the research question at the very core of the discipline. It is thus probably not surprising that it has been the focus of almost all the social choice literature on PB \citep{AzSh21, ReMa23}.
	In this paper, we instead propose an \emph{end-to-end model} of PB that accounts for both the first and the second stages. By studying this model, we aim at better understanding real-life processes.
	
	At first glance, in may not be clear that the first stage is a social choice problem on its own. In real-life implementations of PB processes, it is usual that residents actually submit proposals directly to the municipality \citep{Shah07, WNT21}. Later on, the municipality selects a shortlist of the proposals by taking into account which proposals have been submitted (and potentially more information from the citizens). This is thus a perfect example of a social choice problem. Our end-to-end model will thus formalise the first stage of PB, where proposals are collected and shortlisted, and the second stage where the final budget allocation is selected. 
	
	\medskip
	
	Our contribution, beyond the formulation of the model itself, is twofold. First, we propose and analyse several shortlisting rules for the first stage. Second, we analyse the incentives of engaging in strategic manipulation when making proposals during the first stage---in view of how these affect the second stage. Let us briefly discuss both contributions.
	
	Our first point of focus in this new model will be the shortlisting stage. Interpreting it as multi-winner voting scenario---where the proposals submitted by the agents represent the ballots they cast, and the set of shortlisted proposals is the outcome---we will discuss different ways of determining the shortlist, different \emph{shortlisting rules}. It is important to keep in mind that there are no formal constraints on the outcome of the shortlisting stage. Indeed, any subset of proposals is an admissible outcome (except for the empty set maybe). Because of this, there is a lot of room to develop shortlisting rules. This raises the question of what makes some shortlists more desirable than others. By investigating what happens in practice, we identify four distinct objectives.
	\begin{enumerate}
		\item A first round of review usually removes the proposals that are simply infeasible, typically because of legal issues.\footnote{For instance, in the 2017 PB process in Paris, one of the proposals was to demolish the Sacré-C\oe ur, a church in the centre of Paris. This proposal was rejected by the municipality of Paris because it falls outside of its jurisdiction. For more details, see \href{https://www.francetvinfo.fr/france/ile-de-france/paris/affreux-disproportionne-un-parisien-propose-a-la-mairie-de-raser-le-sacre-coeur_2068737.html}{francetvinfo.fr/france/ile-de-france/paris/affreux-\linebreak{}disproportionne-un-parisien-propose-a-la-mairie-de-raser-le-sacre-coeur\_2068737.html} (in French).} This specific objective will not be incorporated into our analysis as we assume that all projects considered are implementable.
		\item Another goal of the shortlisting stage is to reduce the number of proposals entering the second stage. For instance, if we look at the PB exercises in Lisbon around 30\% of the projects were shortlisted \citep{AlAn14} while it reached 10\% in Toronto \citep{Murr19}.
		\item What we call the shortlisting stage often takes the form of rounds of public meetings where the proposals are discussed. During these meetings, citizens will typically develop their proposals, helped by other citizens and employees of the municipality. It is common for official organisers to avoid that projects that are too similar get proposed.\footnote{This has, for instance, been witnessed at PB meetings in Amsterdam.} This constitutes our third objective of the shortlisting stage: avoiding redundancy in the shortlist.
		\item One of the core objectives of PB is to provide citizens with a platform to express their opinion \citep{Wamp12}, and to witness direct impact of the democratic process. It is necessarily not possible to guarantee every citizen to be satisfied with, at least part of, the outcome of the voting stage---essentially because there is only a limited amount of money available. However, when shortlisting projects, there is no hard constraint on what can be selected. It is thus easier to ensure that everyone will have an impact on the shortlist. This is our fourth objective.
	\end{enumerate}
	In the first half of the paper, we will introduce different shortlisting rules and axioms to study them, motivated by the above observations. Specifically, inspired by the Thiele rules for committee elections \citep{Jans16}, we study \emph{diversity with respect to proposers}, by minimising the number of participants without shortlisted proposals, thus implementing objective number 4. Secondly, inspired by clustering algorithms \citep{JaDu88}, we also explore \emph{diversity w.r.t.\ the shortlist}, by avoiding selecting too many similar projects. This is an implementation of objective number 3. We also formalise some of the objectives in axiomatic terms in order to study the merits of the shortlisting rules.
	
	Our second point of focus concerns the interaction between the two stages of our model. More specifically, we will investigate the incentives for the agents to submit, or not, proposals during the shortlisting stage, because of the potential impact on the final budget allocation. The typical example of strategic behaviour here would be for an agent not to propose to build a fountain on the main square because another similar project is also proposed by someone else and the votes would be split between the two projects, preventing any of them to be selected eventually. To this end, we introduce the notion of \emph{first-stage strategy-proofness} and analyse how it depends on both the information available to participants and the choice of aggregation rules used during the two stages. 
	
	\paragraph{Related work.} 
	Most prior research of a formal nature regarding 
	PB has focused on the allocation stage only \citep{ReMa23}.
	Topics are ranging from concrete rules \citep{TaFa19, REH20}, to strategic behaviour \citep{GKSA19}, or proportionality \citep{ALT18, PPS21, BFLMP23}.
	Much of this work takes inspiration from the literature on multi-winner voting \citep{FSST17},
	exploiting the fact that electing a committee of $k$ representatives is isomorphic to selecting projects when each project costs 1~dollar and the budget limit is $k$~dollars. 
	
	We are not aware of any formal work regarding the shortlisting stage in PB. Having said this, also for this aspect of PB there are certain parallels to multi-winner voting. Note that in multi-winner voting the term `shortlisting' is used in two different senses: either to emphasise that choosing a set of $k$ candidates is but a first step in making a final decision \citep{FSST17}, or to refer to the problem of electing a set of variable size \citep{Kilg16, FST20, LaMa20}. Only the latter is formally related to shortlisting for PB
	(but does not involve costs or a budget limit). Note that its connections to clustering techniques have already been developed \citep{LaMa20}.
	
	\paragraph{Paper outline.} We develop our end-to-end model for PB in Section~\ref{sec:model}. Then, Section~\ref{sec:shortRules} is dedicated to shortlisting rules. We present a complete example in Section~\ref{sec:initialEx}. Shortlisting rules are studied axiomatically in Section~\ref{sec:AxiomsShortRule}. Finally, Section~\ref{sec:FSSP} is devoted to first-stage strategy-proofness.	
	
	\section{The Model}
	\label{sec:model}
	
	In this section, we introduce the two stages of our model for PB and fix our assumptions regarding agent preferences. 
	
	\subsection{Basic Notation and Terminology}
	
	We will denote by $\conceivProjSet = \{p_1, \ldots, p_m\}$ the (finite) set of all \emph{conceivable} projects. Those are projects that can be submitted in the first stage. The cost function $c: \conceivProjSet \to \mathbb{R}_{> 0}$ maps any conceivable project $p \in \conceivProjSet$ to its cost $c(p) \in \mathbb{R}_{> 0}$. The total cost of any set $P \subseteq \projSetAll$ is written $c(P) = \sum_{p \in P} c(p)$.
	We use $b \in \mathbb{R}_{> 0}$ to denote the budget limit and assume without loss of generality that for every project $p \in \conceivProjSet$, we have $c(p) \leq b$. The set of agents participating in the PB exercise is denoted by $\agentSet = \{1, \ldots, n\}$.
	
	\medskip
	
	We shall make use of the following generic procedure to choose a ``best'' subset (fitting the budget) of a given set of projects in view of a given ranking of those projects.
	
	\begin{definition}[Greedy Scheme]
		Consider a subset of projects $P \subset \conceivProjSet$, the cost function $c$, the budget limit $b$, and a strict ordering $\rhd$ over $\projSet$. The \emph{greedy scheme} $\greedy(P, c, b, \rhd)$ is a procedure selecting a budget allocation $\pi$ iteratively as follows. The budget allocation $\pi$ is initially empty. Projects are considered in the order defined by $\rhd$.
		When considering project $p$ for current budget allocation $\pi$, $p$ is selected (added to $\pi$) if and only $c(\pi \cup \{p\}) \leq b$.
		If there is a next project according to $\rhd$, it is considered; otherwise $\pi$ is the output of $\greedy(P, c, b, \rhd)$.
	\end{definition}
	
	We are going to require means for breaking ties, both between alternative projects and between alternative sets of projects.
	We will do so using tie-breaking rules. Those are functions---typically denoted by $\tieRule$---taking as input a set of tied projects, and returning a single one of them.
	
	The most natural way of breaking ties, when no other information is available, is to do so through the index of the projects. This is captured by the \emph{canonical tie-breaking rule}~$\canonTieRule$,  defined for any $P \subseteq \conceivProjSet$ as:
	\[\canonTieRule(P) = \argmin_{p_j \in P} j.\]
	
	We also use tie-breaking rules to transform weak orders over projects into strict orders. Let $\tieRule$ be an arbitrary tie-breaking rule. Take any weak order~$\succsim$ on $\conceivProjSet$. Then for every indifference class $P \subseteq \conceivProjSet$ of $\succsim$, we break ties as follows: $p = \tieRule(P)$ is the first project, then comes $\tieRule(P \setminus \{p\})$, and so forth. Overloading notation, we denote by $\tieRule(\succsim)$ the strict order thus obtained.
	
	Finally, we extend the canonical tie-breaking rule $\canonTieRule$ to non-empty sets $\mathfrak{P} \subseteq 2^\projSet$ of sets of projects in a lexicographic manner: $\canonTieRule(\mathfrak{P})$ is the unique set $P\in\mathfrak{P}$ such that $\canonTieRule(P \symmdiff P') \in P$ for all $P'\in\mathfrak{P}\setminus\{P\}$.\footnote{$\symmdiff$ is the symmetric difference between sets, defined for any $S$ and $S'$ as $S \symmdiff S' = (S \setminus S') \cup (S' \setminus S)$.} Thus, we require that, amongst all the projects on which $P$ and $P'$ differ, the one with the lowest index must belong to $P$. This formalises the way the usual lexicographic tie-breaking is lifted to sets of words (as in a dictionary for instance).
	
	When no tie-breaking rule is specified, it is assumed that ties are broken with respect to $\canonTieRule$.
	
	\subsection{The Shortlisting Stage}
	
	In the first stage, agents are asked to propose projects. 
	A \emph{shortlisting instance} is a tuple $\tuple{\conceivProjSet, c, b}$.
	Because of bounded rationality, an agent may not be able to conceive of all the projects they would approve of if only she were aware of them. We denote by $C_i \subseteq \conceivProjSet$ the set of projects that agent~$i$ can conceive of---their \emph{awareness set}---and we call the vector $\awareStruct = (C_1, \ldots, C_n)$ the \emph{awareness profile}. We do assume that agent~$i$ knows the cost of the projects in~$C_i$ as well as the budget limit~$b$.
	
	We denote by $P_i \subseteq C_i$ the set of projects agent~$i \in \agentSet$ chooses to actually propose, and we call the resulting vector $\shortProfile = (P_1, \ldots, P_n)$ a \emph{shortlisting profile}. We use $(\shortProfile_{-i}, P_i')$ to denote the profile we obtain when, starting from profile~$\shortProfile$, agent~$i$ changes their proposal to $P_i'$.
	
	A \emph{shortlisting rule}~$\shortRule$ maps any given shortlisting instance $I = \tuple{\conceivProjSet, c, b}$ and shortlisting profile $\shortProfile$ to a shortlist, \textit{i.e.}, a set $\shortRule(I, \shortProfile) \subseteq \bigcup \shortProfile$ of shortlisted projects, where $\bigcup \shortProfile = P_1 \cup \cdots \cup P_n$.
	
	\subsection{The Allocation Stage}
	
	In the second stage, agents vote on the shortlisted projects to decide which ones should get funded. An \emph{allocation instance} is a tuple $\tuple{\projSet, c, b}$, where $\projSet \subseteq \conceivProjSet$ is the set of shortlisted projects. Contrary to the shortlisting stage, the agents now know about all the projects they can vote for. They vote by submitting approval ballots, denoted by $A_i \subseteq \projSet$ for each $i \in \agentSet$, giving rise to a \emph{profile} $\profile = (A_1, \ldots, A_n)$. 
	The \emph{approval score} of a project~$p$ in profile~$\profile$ is $n_p^{\profile} = |\{i \in \agentSet \mid p \in A_i\}|$.
	
	For a given instance $I$, the goal is to choose a \emph{budget allocation} $\pi \subseteq \projSet$. It is feasible if $c(\pi) \leq b$, and $\allocSet(I)$ is the set of feasible budget allocations. Moreover, $\pi \in \allocSet(I)$ is \emph{exhaustive} if there exists no $p \in \projSet \setminus A$ such that $c(\pi \cup \{p\}) \leq b$, and $\allocSetEx(I)$ is the set of exhaustive budget allocations in $I$.
	
	An \emph{allocation rule} $\pbRule$ maps any given allocation instance~$I$ and profile~$\profile$ to a feasible budget allocation $\pbRule(I, \profile) \in \allocSet(I)$. An allocation rule~$\pbRule$ is \emph{exhaustive} if, for all instances~$I$ and all profiles~$\profile$, we have $\pbRule(I, \profile) \in \allocSetEx(I)$.
	
	We now introduce some allocation rules, as defined by \citet{ReMa23}.
	\begin{itemize}
		\item \greWelCost{} is defined for every instance $I = \tuple{\projSet, c, b}$ and profile $\profile$ as:
		\[\greWelCost(I, \profile) = \canonTieRule\left(\greedy(\projSet, c, b, \geq_{\mathit{app}}^{\profile})\right),\]
		where $\geq_{\mathit{app}}^{\profile}$ is such that $p \geq_{\mathit{app}}^{\profile} p'$ holds if and only if $n_{p}^{\profile} \geq n_{p'}^{\profile}$.
		\item \maxWelCost{} is defined for every instance $I = \tuple{\projSet, c, b}$ and profile $\profile$ as:
		\[\maxWelCost(I, \profile) = \canonTieRule\left(\argmax_{\pi \in \allocSet(I)} \sum_{A \in \profile} c(A \cap \pi)\right).\]
		
		\item \greWelCard{} is defined for every instance $I = \tuple{\projSet, c, b}$ and profile $\profile$ as:
		\[\greWelCard(I, \profile) = \canonTieRule\left(\greedy(\projSet, c, b, \geq_{\nicefrac{\mathit{app}}{c}}^{\profile})\right),\]
		where $\geq_{\mathit{app}}^{\profile}$ is such that $p \geq_{\nicefrac{\mathit{app}}{c}}^{\profile} p'$ holds if and only if $\nicefrac{n_{p}^{\profile}}{c} \geq \nicefrac{n_{p'}^{\profile}}{c}$.
		\item \maxWelCard{} is defined for every instance $I = \tuple{\projSet, c, b}$ and profile $\profile$ as:
		\[\maxWelCard(I, \profile) = \canonTieRule\left(\argmax_{\pi \in \allocSet(I)} \sum_{A \in \profile} |A \cap \pi|\right).\]
		
		\item \seqPhragmen{} constructs, for every instance $I = \tuple{\projSet, c, b}$ and profile $\profile$, budget allocations using the following continuous process.	Voters receive money in a virtual currency. They all start with a budget of~0 and that budget continuously increases as time passes. At time~$t$, a voter will have received an amount $t$ of money. For any time $t$, let $P^\star_t$ be the set of projects $p \in \projSet$ for which the supporters of $p$ altogether have more than $c(p)$ money available. As soon as, for a given $t$, $P^\star_t$ is non-empty, if there exists a $p \in P^\star$ such that $c(\pi \cup \{p\}) > b$, the process stops; otherwise one project from $P^\star_t$ is selected (using \canonTieRule{} if needed), the budget of its supporters is set to 0, and the process resumes.
		
		\item \maximinSupport{} behaves similarly as \seqPhragmen{} but allows for payments to be redefined in each round. We do not provide the actual definition but refer the reader to \citet{ReMa23} for it.s
		
		\item \mesSat{$\satisfaction$} is another rule similar to \seqPhragmen{}, parameterised by a satisfaction function $\satisfaction$. Roughly speaking, this rule behaves similarly as \seqPhragmen{} except that agents receive the money upfront. Once again we do not provide the full definition and refer the reader to \citet{ReMa23} for all the details. We denote by \mesSat{$\cardSatisfaction$} the method of equal shares used with the cardinality satisfaction function, and by \mesSat{$\costSatisfaction$} the method of equal shares used with the cost satisfaction function.
	\end{itemize}
	Note that these allocation rules are only used in Proposition~\ref{prop:NomShortRule_UFSSPA}.
	
	\subsection{Agent Preferences}
	
	Later on, we will discuss the incentives of the agents in our end-to-end model. To do so, we need ways of discussing their preferences. In the following we present what we assume to be the internal preference model that the agents follow.
	
	Consider a shortlisting instance $\tuple{\conceivProjSet, c, b}$.
	We make the assumption that agent~$i \in \agentSet$ has preferences over all individual projects in~$\conceivProjSet$---including those they are unaware of---and that those preferences take the form of 
	a strict linear order~$\rhd_i$ over $\conceivProjSet$. It is important to keep in mind that we do not assume that $i$ is aware of $\rhd_i$ in full.
	For any subset of projects $P \subseteq \conceivProjSet$, we denote by $\rhd_{i|P}$ the restriction of~$\rhd_i$ to~$P$. Our second assumption is that for any subset of projects $P \subseteq \conceivProjSet$, agent $i$ is able to determine an \emph{ideal set} of projects, denoted $\idealSet_i(P)$, that is determined through the use of the greedy selection procedure. Formally, we define:
	\[\idealSet_i(P) = \greedy(P, c, b, \rhd_{i|P}).\]
	This approach permits us to model what constitutes a \emph{truthful} vote by an agent for varying shortlists~$P$.  We call the vector $\topProfile(P) = (\idealSet_1(P), \ldots, \idealSet_n(P))$ the \emph{ideal profile} given~$P$.
	
	When investigating the potential strategic behaviour of the agents, we will need to compare different budget allocations from the perspective of the agents. We assume that agents derive preferences over budget allocations from their ideal sets through the use of \emph{completion principles} \citep{LaXi16}. A completion principle is a method that, given an ideal point (or top element), generates a weak order over subsets of projects.\footnote{One could also work with partial orders here. All of our definitions would carry over seamlessly in this case. Their interpretation would however differ.} For any ideal set $\idealSet \subseteq \conceivProjSet$, we denote by $\succsim_\idealSet$ the weak order over subsets of projects induced by a given completion principle (we omit the latter to simplify the notation; it will always be clear from the context). We will denote by $\succ_\idealSet$ the strict part of $\succsim_\idealSet$, and $\sim_\idealSet$ its indifference part. For instance, if we follow the \emph{cardinality-based} completion principle, then, given an ideal set $\idealSet \subseteq \conceivProjSet$, we have $P \succsim_\idealSet P'$ for every two subsets of projects $P, P' \subseteq \conceivProjSet$ such that $|P \cap \idealSet| \geq |P' \cap \idealSet|$. Under the \emph{cost-based} completion principle, we have $P \succsim_\idealSet P'$ for every two subsets of projects $P, P' \subseteq \conceivProjSet$ such that $c(P \cap \idealSet) \geq c(P' \cap \idealSet)$.
	
	Instead of stating our results for specific completion principles, we will phrase them so that they apply to all completion principles behaving in certain ways. In the following we introduce the different properties we will need. A completion principle generating $\succsim_\idealSet$ from $\idealSet$ is said to satisfy:
	
	\begin{itemize}
		\item \textbf{Top-First} if the ideal point $\idealSet$ strictly dominates any other subset of projects: for all $P \subseteq \projSet$, we have $\idealSet \succ_\idealSet P$;
		\item \textbf{Top-Sufficiency} if the empty set is strictly dominated by any non-empty subset of $\idealSet$: for all $P \subseteq \idealSet$, if $P \neq \emptyset$, then we have $P \succ_\idealSet \emptyset$;
		\item \textbf{Top-Necessity} if any subset of projects that does not intersect with $\idealSet$ is treated the same way as the empty set: for all $P \subseteq \conceivProjSet$, if $P \cap \idealSet = \emptyset$, then we have $P \sim_\idealSet \emptyset$;
		\item \textbf{Cost-Neutral Monotonicity} if selecting more projects from $\idealSet$ is strictly better than fewer, as long as they all have the same cost: for all $P, P' \subseteq \conceivProjSet$ such that $P \symmdiff P' \subseteq \idealSet$, if $|P \cap \idealSet| > |P' \cap \idealSet|$ and $c(p) = c(p')$ for any two projects $p, p' \in P \symmdiff P'$, then we must have $P \succ_\idealSet P'$.
	\end{itemize}
	So, a completion principle is top-first if $\idealSet$ is indeed the best outcome. It is top-sufficient if it is sufficient to have some projects from $\idealSet$ to be better than the empty set. It is top-necessary if it is necessary to have some projects from $\idealSet$ to be better than the empty set. Finally, it is cost-neutral monotonic if having more projects from $\idealSet$ is better than having less, even if those are different projects, provided that they all have the same cost.
	
	As a warm-up, the reader can check that both the cardinality- and the cost-based completion principles satisfy all of the above properties.
	
	We finally provide our last definition (for this section). For any weak order $\succsim_\idealSet$ and any family of subsets of projects $\mathfrak{P} \subseteq 2^\projSet$, we use $\undom(\succsim_\idealSet,\mathfrak{P})$ to denote the set of subsets of projects that are undominated in $\mathfrak{P}$ according to $\succsim_\idealSet$.
	
	\section{Shortlisting Rules}
	\label{sec:shortRules}
	
	Many allocation rules have been defined in the literature \citep{ReMa23}. This is however not the case for shortlisting rules. In this section, we therefore propose several of them.
	
	\medskip
	
	Our first shortlisting rule is what arguably is the simplest of them, the \emph{nomination (shortlisting) rule}. Following this rule, every agent acts as a nominator, \textit{i.e.}, someone whose proposals are always all accepted.
	
	\begin{definition}[Nomination Rule]
		\label{def:nomShortRule}
		The \emph{nomination rule} $\nominationShortRule$ returns, for every shortlisting instance $I = \tuple{\conceivProjSet, c, b}$ and shortlisting profile $\shortProfile$, the shortlist:
		\[\nominationShortRule(I, \shortProfile) = \bigcup \shortProfile.\]
	\end{definition}
	
	Although very natural, the nomination shortlisting rule is not effective in reducing the number of projects. So let us go through some more examples of shortlisting rules.
	
	\subsection{The Equal Representation Shortlisting Rule}
	
	Since the budget limit is not a hard constraint at the shortlisting stage, one of its objectives could be to ensure that every participant has a say in the decision. Building on this idea, we introduce the \emph{$k$-equal representation shortlisting rule}, a Thiele rule\footnote{Thiele rules are multi-winner voting rules that have received substantial attention \citep{LaSk23}. We briefly sketch their definition here. Let $\vec{w} = (w_1, w_2, \ldots)$ be an infinite weight vector. Assume we are aiming at selecting exactly $k \in \mathbb{N}_{> 0}$ projects. Given a shortlisting instance $I = \tuple{\conceivProjSet, c, b}$ and a shortlisting profile $\shortProfile = (P_1, \ldots, P_n)$, the $\vec{w}$-Thiele method is a multi-winner voting rule that selects $k$-sized subsets of projects $P$ with maximum weight, where the weight of $P$ is defined as $\sum_{i \in \agentSet} \sum_{j = 1}^{|P \cap P_i|} w_j$.} that maximises the minimum number of selected projects per agent. Here the parameter $k$ determines the maximum cost of the shortlist selected by the rule, expressed as a multiplier of $b$.
	
	\begin{definition}[$k$-Equal Representation Shortlisting Rule]
		\label{def:equalReprShortRule}
		Let $k \in \mathbb{N}$. The \emph{$k$-equal representation shortlisting rule} $\reprShortRule_k$ is defined for every shortlisting instance $I = \tuple{\conceivProjSet, c, b}$ and every shortlisting profile $\shortProfile = (P_1, \ldots, P_n)$ as:
		\[\reprShortRule_k(I, \shortProfile) = \canonTieRule \left(\argmax_{\substack{P \subseteq \bigcup \shortProfile\\c(P) \leq k \cdot b}} \sum_{i \in \agentSet} \sum_{\ell = 0}^{|P_i \cap P|} \frac{1}{n^\ell}\right).\]
	\end{definition}
	
	\noindent The choice of the weight $\nicefrac{1}{n^\ell}$ in the definition of the rule ensures that the rule will always select a project proposed by the agents with the smallest number of selected projects, who still have unselected projects.
	
	The $k$-equal representation shortlisting rule can be seen as a Thiele rule where the $j$-th weight is defined as $w_j = \sum_{\ell = 0}^{j} \nicefrac{1}{n^\ell}$. Note that in particular this implies that the weight is dependent on the number of agents. 
	
	\medskip
	
	Let us explore the computational complexity of the rule now.
	
	\begin{proposition}
		Let $k \in \mathbb{N}_{> 0}$. There is no algorithm running in polynomial-time that computes, given a shortlisting instance $I$ and profile $\shortProfile$, the outcome of the $k$-equal representation shortlisting rule, unless $\complexP = \complexNP$.
	\end{proposition}
	
	\begin{proof}
		Note that for $k' \in \mathbb{N}_{> 0}$, if all projects have cost $\nicefrac{b}{k'}$, computing the outcome of the $k$-equal representation shortlisting rule amounts to finding a committee of size $k'$ with a Thiele rule with weights $\left(1, \nicefrac{1}{n}, \nicefrac{1}{n^2}, \ldots\right)$ for a multi-winner election \citep{Jans16}. Interestingly, the reduction  presented by \citet{AGGMMW14} to show that proportional approval voting is \complexNP-complete works for all Thiele rules with decreasing weights. Since this is the case here, their reduction also applies.
	\end{proof}
	
	\subsection{Median-Based Shortlisting Rules}
	
	One criterion frequently used for shortlisting in real PB processes is the similarity between the proposals. Since only few projects will be shortlisted, it would be particularly inefficient to shortlist two very similar ones. In the following we rationalise this decision process by introducing a shortlisting rule that clusters the projects and selects representative projects for each cluster.
	
	\medskip
	
	We assume that projects are embedded in a metric space, the distance between two projects being given. Using this metric space, we will try and cluster the proposals submitted during the shortlisting stage.
	
	Formally speaking, we call \emph{distance} any metric over $\conceivProjSet$. For a distance $\delta$, let $\mathit{med}(P)$ be the the \emph{geometric median} of $P \subseteq \conceivProjSet$ defined by:
	\[\mathit{med}(P) = \canonTieRule \left(\left\{\{p\} \mid p \in \argmin_{p^\star \in P} \sum_{p' \in P} \delta(p^\star, p')\right\}\right).\]
	
	A partition of $P$, denoted by $V = \{V_1, \ldots, V_p\}$, is a \emph{$(k, \ell)$-Voronoï partition} with respect to the distance $\delta$, if the representatives of $V$ cost no more than $k \cdot b$ in total:
	\[\sum_{V_j \in V} c(\mathit{med}(V_j)) \leq k \cdot b,\]
	and for every distinct $V_j, V_{j'} \in V$ and every project $p \in V_j$, we have:
	\begin{itemize}
		\item $\delta(p, \mathit{med}(V_j)) \leq \delta(p, \mathit{med}(V_{j'}))$, \textit{i.e.}, every project is in the cluster of its closest geometric median; and
		\item $\delta(p, \mathit{med}(V_j)) \leq \ell$, \textit{i.e.}, $p$ is within distance $\ell$ of $\mathit{med}(V_j)$.
	\end{itemize}
	Thus, the parameter $k$ bounds the total cost of the representatives, and the parameter $\ell$ bounds the maximum distance within a cluster. We denote by $\mathcal{V}_{\delta, k,\ell}(P)$ be the set of all $(k, \ell)$-Voronoï partitions of $P$ with respect to distance $\delta$ and parameters $k$ and $\ell$.
	
	With all these definitions in mind, we are ready to define the class of $k$-median shortlisting rules. As before, $k$ parametrised the maximum cost of the shortlist.
	
	\begin{definition}[$k$-Median Shortlisting Rules]
		\label{def:medianShortRule}
		Let $k \in \mathbb{N}$. The \emph{$k$-median shortlisting rule} $\medianShortRule_{k, \delta}$ with respect to the distance $\delta$ is such that for every shortlisting instance $I$ and profile $\shortProfile$, we have:
		\[\medianShortRule_{k, \delta}(I, \shortProfile) = \canonTieRule\left(\left\{\bigcup_{V_j \in V} \mathit{med}(V_j) \mid V \in \mathcal{V}_{\delta, k, \ell^\star}\left(\bigcup \shortProfile\right) \right\}\right)\]
		where $\ell^\star$ is the smallest $\ell$ such that $\mathcal{V}_{\delta, k,\ell} (\bigcup \shortProfile) \neq \emptyset$.
	\end{definition}
	
	\noindent Note that we chose to minimise $\ell$ in our definition. One can similarly try to minimise $k$, or both $\ell$ and $k$, instead.
	
	It is finally worth saying a few words about the computation complexity of these shortlisting rules. It is straightforward to show that, unless $\complexP = \complexNP$, there cannot be an algorithm that runs in polynomial time and that computes the outcome of a $k$-median shortlisting rule, for any value of $k$ and suitable distance $\delta$. Indeed, when $\delta$ is the Euclidean distance over $\mathbb{R}^2$, our formulation coincide with the $k$-median problem, known to be \complexNP-hard \citep{KaHa79}. Several other results have been published, including approximation algorithms \citep{KMN+04} and fixed-parameters analyses \citep{CGK+19}, and can be used to cope with intractability. 
	
	\section{End-to-End Example}
	\label{sec:initialEx}
	
	We have now introduced all the components of our model. Before getting to the more technical analysis, let us give an example to clarify the whole setting.
	
	\medskip
	
	Consider the following shortlisting instance $I = \tuple{\conceivProjSet, c, b}$ with nine projects, $\conceivProjSet = \{p_1, \ldots p_{9}\}$. Suppose for simplicity that for every project $p \in \conceivProjSet$ we have $c(p) = 1$, \textit{i.e.}, we are in the unit-cost setting. The budget limit is $b = 3$. Consider five agents as described below.
	\begin{center}
		\begin{tabular}{lccc}
			& \begin{tabular}{@{}c@{}}\textbf{Preferences over} \\ \textbf{the Projects}\end{tabular} & \textbf{Awareness set $C_i$} & \begin{tabular}{@{}c@{}}\textbf{Ideal Set} \\ \textbf{Based on $C_i$}\end{tabular} \\
			\midrule
			\textbf{Agent 1} & $p_4 \rhd p_5 \rhd p_1 \rhd p_2 \rhd \cdots$ & $\{p_1, p_2, p_4, p_5\}$ & $\{p_1, p_4, p_5\}$ \\
			\textbf{Agent 2} & $p_1 \rhd p_2 \rhd p_6 \rhd p_4 \rhd \cdots$ & $\{p_2, p_6\}$ & $\{p_2, p_6\}$ \\
			\textbf{Agent 3} & $p_1 \rhd p_2 \rhd p_7 \rhd p_4 \rhd \cdots$ & $\{p_2, p_7\}$ & $\{p_2, p_7\}$ \\
			\textbf{Agent 4} & $p_1 \rhd p_3 \rhd p_8 \rhd p_5 \rhd \cdots$ & $\{p_3, p_8\}$ & $\{p_3, p_8\}$ \\
			\textbf{Agent 5} & $p_1 \rhd p_3 \rhd p_9 \rhd p_5 \rhd \cdots$ & $\{p_3, p_9\}$ & $\{p_3, p_9\}$
		\end{tabular}
	\end{center}
	Assuming agents are truthful they will propose projects according to their ideal sets, computed given their awareness sets. The truthful shortlisting profile would then be:
	\[\shortProfile = (\{p_1, p_4, p_5\}, \{p_2, p_6\}, \{p_2, p_7\}, \{p_3, p_8\}, \{p_3, p_9\}).\]
	Thus, if the nomination shortlisting rule $\nominationShortRule$ is used, the shortlist would be $\conceivProjSet$. In case the $1$-equal representation rule $\reprShortRule_1$ is used, it would be $\{p_1, p_2, p_3\}$.
	
	Suppose the set of shortlisted projects is $\projSet = \conceivProjSet$. Agents are now aware of all the shortlisted projects. They recompute their ideal sets given the new information. Still assuming that agents behave truthfully, the profile for the allocation stage is:
	\[\profile = (\{p_1, p_4, p_5\}, \{p_1, p_2, p_6\}, \{p_1, p_2, p_7\}, \{p_1, p_3, p_8\}, \{p_1, p_3, p_9\}).\]
	This corresponds to the vector of the ideal sets computed by each agent with respect to $\projSet$, and their respective preferences over the projects. With such a profile, if the allocation rule \greWelCost{} is used, the final budget allocation would be $\pi = \{p_1, p_2, p_3\}$.
	
	\section{Axioms for Shortlisting Rules}
	\label{sec:AxiomsShortRule}
	
	We now assess the axiomatic merits of the shortlisting rules we have introduced. 
	
	\medskip
	
	The first axiom we define is \emph{non-wastefulness}. It requires that no amount of the budget should be wasted because not enough projects were shortlisted.
	
	\begin{definition}[Non-Wastefulness]
		\label{def:nonWastefulness}
		A shortlisting rule $\shortRule$ is \emph{non-wasteful} if for every shortlisting instance $I = \tuple{\conceivProjSet, c, b}$ and profile $\shortProfile$, one of the following two holds:
		\[c(\shortRule(I, \shortProfile)) \geq b \qquad \text{ or } \qquad \shortRule(I, \shortProfile) = \bigcup \shortProfile.\]
	\end{definition}
	
	\noindent This axioms can be interpreted as an efficiency requirement ensuring that no money is wasted because of the shortlisting rule.
	
	\medskip
	
	We believe that another important property of a shortlisting rule is that every agent is represented in the outcome. This is particularly relevant in the shortlisting stage since any subset of $\conceivProjSet$ is theoretically admissible.
	
	\begin{definition}[Representation Efficiency]
		For a given shortlisting instance $I = \tuple{\conceivProjSet, c, b}$ and a given shortlisting profile $\shortProfile$, a set of projects $P \subseteq \conceivProjSet$ is \emph{representatively dominated} if there is a set $P' \subseteq \conceivProjSet$ with $c(P') \leq c(P)$, and $|P' \cap P_i| \geq |P \cap P_i|$ for all $i \in \agentSet$, with a strict inequality for at least one agent.
		
		A shortlisting rule $\shortRule$ is \emph{representatively efficient} if for every shortlisting instance $I = \tuple{\conceivProjSet, c, b}$, and every shortlisting profile $\shortProfile$, $\shortRule(I, \shortProfile)$ is not representatively dominated by any other subset of projects.
	\end{definition}
	
	\noindent A set of projects $P$ is thus representatively dominated by another one $P'$ if $P'$ does not cost more than $P$, and, for every agent, at least as many projects that they submitted have been selected in $P'$ as in $P$, and strictly more for at least one of them.
	
	This axiom provides guarantees that the shortlisting rule is aiming to achieve some kind of representation. Note however that the guarantee is not very strong and can lead to large disparities between the agents: some could have all their proposals shortlisted, and some others none, in a shortlist that is still representatively efficient.
	
	\medskip
	
	These are the two axioms with respect to which we will analyse the shortlisting rules. This analysis is presented below.
	
	\medskip
	
	We will start with the nomination shortlisting rule, that trivially satisfies both non-wastefulness and representation efficiency.
	
	\begin{proposition}
		\label{prop:Axioms_Nom_Short_Rule}
		The nomination shortlisting rule is non-wasteful and representatively efficient.
	\end{proposition}
	
	\begin{proof}
		With the nomination shortlisting rule $\nominationShortRule$, for every shortlisting instance $I$ and profile $\shortProfile$, we have $\nominationShortRule(I, \shortProfile) = \bigcup \shortProfile$. Thus, the second condition of non-wastefulness is always trivially satisfied.
		
		Given that every project is shortlisted, the shortlist $\nominationShortRule(I, \shortProfile)$ cannot be representatively dominated. $\nominationShortRule$ is thus representatively efficient.
	\end{proof}
	
	We now move to the $k$-equal representation shortlisting rule, showing that it is both non-wasteful and representatively efficient, as long as $k$ is at least 2.
	
	\begin{proposition}
		\label{prop:Axioms_Repr_Short_Rule}
		For every $k \geq 2$, the $k$-equal representation shortlisting rule is non-wasteful, but it is not for $k = 1$. Moreover, for every $k \geq 1$, the $k$-equal representation shortlisting rule is representatively efficient.
	\end{proposition}
	
	\begin{proof}
		Let us first prove that for every $k \geq 2$, the $k$-equal representation shortlisting rule $\shortRule$ is non-wasteful. Suppose it is not, then, there would exist a shortlisting instance $I = \tuple{\shortProfile, c, b}$ and a shortlisting profile $\shortProfile$ such that:
		\[c(\reprShortRule_k(I, \shortProfile)) < b \qquad \text{ and } \qquad \reprShortRule_k(I, \shortProfile) \neq \bigcup \shortProfile.\]
		From this, we know that there exists a project $p \in \bigcup \shortProfile$ that has not been shortlisted, \textit{i.e.}, such that $p \notin \reprShortRule_k(I, \shortProfile)$. Thus, the representation score of the set $\reprShortRule_k(I, \shortProfile) \cup \{p\}$ is higher than that of $\shortRule(I, \shortProfile)$. 
		Moreover, for any $k \geq 2$, the facts that $c(p) \leq b$ and $c(\reprShortRule_k(I, \shortProfile)) < b$ together imply that:
		\[c(\shortRule(I, \shortProfile) \cup \{p\}) \leq 2 \cdot b \leq k \cdot b.\]
		Overall, if such a project $p$ exists, then $\reprShortRule_k(I, \shortProfile) \cup \{p\}$ is an admissible outcome of $\reprShortRule_k$ with a higher total weight than $\reprShortRule_k(I, \shortProfile)$. This contradicts the definition of $\reprShortRule_k$, and thus proves that it is non-wasteful.
		
		Note that for $k = 1$, the definition of $\reprShortRule_k$ implies that the cost of the shortlist will not be more than $b$ (Definition \ref{def:equalReprShortRule}), and non-wastefulness requires the same cost to be at least $b$ (Definition \ref{def:nonWastefulness}). Since projects are indivisible, it is clearly not always possible to shortlist a set of projects of cost exactly $b$.
		
		\medskip
		
		We now show that for every $k \geq 1$, the $k$-equal representation shortlisting rule is representatively efficient. The proof is actually trivial, it is immediately derived from the choice of the weight $\nicefrac{1}{n}$ in the definition of the rule. Indeed, since $\nicefrac{1}{n} > 0$, a representatively dominated shortlist would always have a lower total score and would thus not be selected.
	\end{proof}
	
	Now comes the turn of the median shortlisting rules. We prove that these rules are non-wasteful, but not representatively efficient.
	
	\begin{proposition}
		\label{prop:Axioms_Median_Short_Rule}
		Let $\delta$ be an arbitrary distance over $\conceivProjSet$. The following facts hold:
		\begin{itemize}
			\item For every $k \geq 2$, the shortlisting rule $\medianShortRule_{k, \delta}$ is non-wasteful;
			\item There exists no $k \in \mathbb{N}_{> 0}$ such that $\medianShortRule_{k, \delta}$ is representatively efficient.
		\end{itemize}
	\end{proposition}
	
	\begin{proof}
		The proof that for $k \geq 2$, the $\medianShortRule_{k, \delta}$ is non-wasteful is similar to that of Proposition~\ref{prop:Axioms_Repr_Short_Rule}. To see why, note that for every shortlisting instance $I$ and profile $\shortProfile$, there will never be an unselected project $p \in \bigcup \shortProfile \setminus \shortRule(I, \shortProfile)$ such that $c(\shortRule(I, \shortProfile) \cup \{p\}) \leq k \cdot b$. Indeed, if such a $p$ exists, selecting it would always lead to a smaller within-cluster distance, simply by including $p$ as its own cluster (since by the definition of a metric, the distance between $p$ and any other project $p' \in \conceivProjSet \setminus \{p\}$ is non-zero). We can thus reach the same contradiction that we reached in the proof of Proposition~\ref{prop:Axioms_Repr_Short_Rule}.
		
		\medskip
		
		It is also easy to see that $\medianShortRule_{k, \delta}$ is not efficiently representative. We do not provide a formal proof here as the correctness of the statement should be intuitively clear. It is derived from the fact that for any shortlisting profiles $\shortProfile$ and $\shortProfile'$, such that $\bigcup \shortProfile = \bigcup \shortProfile'$, the outcome of $\medianShortRule_{k, \delta}$ would be the same. This means that $\medianShortRule_{k, \delta}$ is completely oblivious of the identity of the agents, making it fail representation efficiency.
	\end{proof}
	
	The axiomatic analysis of the shortlisting rules is now complete. We will move on to the next focus point of this paper: the interactions between the two stages.
	
	\section{First-Stage Strategy-Proofness}
	\label{sec:FSSP}
	
	We now turn to the analysis of strategic interactions during the shortlisting stage. Remember our motivational example in the introduction, we wondered whether Sophie should propose her project about the fountain or not, because of its impact on the final decision, taken during the second stage. We have hinted at reasons why it would actually be better for her not to.
	Throughout this section, we will study such strategic behaviour and investigate whether it can be prevented or not.
	
	\medskip
	
	One of the main challenges to formalise the concept of strategic behaviour during the first stage, is that agents actually reason about the outcome of the process---the final budget allocation---that is only decided one stage later. Let us then take the time to discuss the information available to an agent willing to strategise, the \emph{manipulator}.
	
	In the classical voting framework \citep{Zwic16}, it is assumed that the potential manipulator has access to all the other ballots before submitting their own.
	In our setting, when considering a manipulator choosing which proposal to submit during the first stage, the same assumption is reasonable with respect to the proposals submitted by the other agents during the first stage, but not with respect to the ballots the other agents are going to submit during the second stage, only \emph{after} the shortlist will have been determined.
	Indeed, the set of actions for the second stage depends on the proposal of the manipulator in the first stage.
	We thus need to reason about the outcome of the second stage given the profile that the manipulator \emph{expects} to occur.
	
	We explore three possibilities. In the first two cases, a manipulator in the first stage is unsure what will happen during the second stage, but assumes that either the worst scenario will be realised (\emph{pessimistic manipulation}) or the best one (\emph{optimistic manipulation}). In the third case, they know the other agents' true preferences and trusts they will vote accordingly (\emph{anticipatory manipulation}).
	
	Because there are no reasons to assume that a potential manipulator would only behave strategically in the first stage, and not in the second stage, we also need the concept of a best response in the second stage. For that, we introduce some further notation. For a given allocation rule $\pbRule$, allocation instance $I = \tuple{\projSet, c, b}$, profile $\profile$, and agent $i \in \agentSet$, let $A_i^\star(I, \profile, \pbRule)$ be the best response of $i$ to $\profile$, defined such that:
	\[A_i^\star(I, \profile, \pbRule) = \canonTieRule(\{A_i' \in \projSet \mid \pbRule(I, (\profile_{-i}, A_i')) = \canonTieRule(\mathfrak{P}^\star)\}),\]
	where $\mathfrak{P}^\star = \undom(\succsim_{\idealSet_i(\projSet)}, \{\pbRule(I, (\profile_{-i}, P)) \mid P \subseteq \projSet\})$ and $\succsim$ is generated given an arbitrary completion principle.
	
	Let us unravel a bit this definition. $\succsim_{\idealSet_i(\projSet)}$ is the weak order over subsets of projects that is induced by the completion principle in use, based on $\idealSet_i(\projSet)$, the ideal set of agent $i$. Then, $\mathfrak{P}^\star$ is the set of undominated budget allocations returned by the allocation rule $\pbRule$, for any approval ballot $P$, agent $i$ can submit in the second stage, where domination is defined with respect to $\succsim_{\idealSet_i(\projSet)}$. Because we need a single outcome, we break ties between the, potentially, several such undominated budget allocation. Then, $A_i^\star(I, \profile, \pbRule)$ is the ballot that achieved this aforementioned budget allocation and that is selected by the tie-breaking rule (since the relevant may be reachable via several ballots). The intuition is that $A_i^\star(I, \profile, \pbRule)$ is the best ballot agent $i$ can submit in the second stage given $I$, $\profile$, and $F$, and our assumptions on the preferences of the agents.
	When clear from the context, we omit $I$, $\profile$, and/or $\pbRule$ from the notation $A_i^\star(I, \profile, \pbRule)$.
	
	We are now ready to properly formalise all we described above. This is the aim of the following definition.
	
	\begin{definition}[Successful Manipulation]
		Let $\shortRule$ be a shortlisting rule, 
		$\pbRule$ an allocation rule, 
		$I_1 = \tuple{\conceivProjSet, c, b}$ a shortlisting instance, 
		$\shortProfile$ a shortlisting profile,
		and $P_i' \subseteq \conceivProjSet$ an alternative proposal for agent $i\in\agentSet$.
		Consider the shortlists $\projSet = \shortRule(I_1, \shortProfile)$ and $\projSet' = \shortRule(I_1, (\shortProfile_{-i}, P_i'))$, determining 
		the allocation instances $I_2 = \tuple{\projSet, c, b}$ and $I'_2 = \tuple{\projSet', c, b}$.
		
		For any two profiles, $\profile$ for $\projSet$ and $\profile'$ for $\projSet'$, we simplify the notation by defining the two following abbreviations:
		\begin{align*}
			F^\star(I_2, \profile) & = F(I_2, (\profile_{-i},A_i^\star(I_2,\profile))), \\
			F^\star(I'_2, \profile') & = F(I'_2, (\profile'_{-i},A_i^\star(I'_2,\profile'))).
		\end{align*}
		To clarify, $F^\star(I_2, \profile)$ is thus the final budget allocation for the instance $I_2$ and profile $\profile$ in which agent $i$ is playing their best response. The case of $F^\star(I'_2, \profile')$ is analogous, for the instance $I_2'$ and profile $\profile'$.
		
		Then, for a given completion principle generating $\succsim_{\idealSet_i(\projSet\cup\projSet')}$, we say that:
		\begin{itemize}
			
			\item $P'_i$ is a \emph{successful pessimistic manipulation} if, 
			for all profiles $\profile$ on $\projSet$ and $\profile'$ on $\projSet'$,
			it is the case that $F^\star(I_2', \profile') \succsim_{\idealSet_i(\projSet\cup\projSet')} F^\star(I_2, \profile)$, 
			with a strict preference for at least one pair~$(\profile,\profile')$.
			
			\item $P'_i$ is a \emph{successful optimistic manipulation} if,
			for at least one profile $\profile$ on $\projSet$ and one profile $\profile'$ on $\projSet'$,
			it is the case that $F^\star(I_2', \profile') \succ_{\idealSet_i(\projSet\cup\projSet')} F^\star(I_2, \profile)$.
			
			\item $P'_i$ is a \emph{successful anticipatory manipulation} if,
			for the two profiles~$\profile = \topProfile(\projSet)$ and $\profile' = \topProfile(\projSet')$,
			it is the case that $F^\star(I_2', \profile') \succ_{\idealSet_i(\projSet\cup\projSet')} F^\star(I_2, \profile)$.
			
		\end{itemize}
	\end{definition}
	
	\noindent 
	Thus, a pessimist is pessimistic with respect to the advantages they can gain from manipulating: assuming the best if she is truthful and the worst otherwise. 
	For optimists it is the other way around. Finally, an anticipatory manipulator knows everyone's preferences on both $\projSet$ and $\projSet'$ and uses them to predict their votes for the second stage.
	
	We are looking for rules that do not allow for successful manipulation, \textit{i.e.}, that are \emph{first-stage strategy-proof} (FSSP). We distinguish two cases: either the manipulator is \emph{restricted to} their awareness set (R-FSSP) or they can also propose any of the projects proposed by others (during the shortlisting stage), \textit{i.e.}, they are \emph{unrestricted} (U-FSSP).
	\begin{figure}
		\centering
		\begin{tikzpicture}[shorten < = 3pt, shorten > = 3pt]
			\node[align = center] (info) at (0, 0) {What information is available\\ to the manipulator $i \in \agentSet$?};
			
			\node[align = center] (R-FSSP) at (-4, -2) {Only their own\\awareness set $C_i$\\\emph{R-FSSP}};
			\node[align = center] (U-FSSP) at (4, -2) {Their awareness set $C_i$ and \\the proposals of the other agents\\\emph{U-FSSP}};
			\path[->] (info) edge (R-FSSP);
			\path[->] (info) edge (U-FSSP);
			
			\node[align = center] (anticipation) at (0, -4) {What is the manipulator\\anticipating for the allocation stage?};
			\path[->] (R-FSSP) edge (anticipation);
			\path[->] (U-FSSP) edge (anticipation);
			
			\node[] (worst) at (0, -6) {The worst};
			\node[] (best) at (0, -7) {The best};
			\node[align = center] (truth) at (0, -8) {The others to\\behave truthfully};
			\path[->, shorten < = 1pt, shorten > = 1pt] (anticipation) edge (worst);
			
			\node[] (R-FSSP-P) at (-4, -6) {\emph{R-FSSP-P}};
			\node[] (R-FSSP-O) at (-4, -7) {\emph{R-FSSP-O}};
			\node[] (R-FSSP-A) at (-4, -8) {\emph{R-FSSP-A}};
			\path[->, dashed, my-gray-light] (R-FSSP) edge (R-FSSP-P);
			\path[->] (worst) edge (R-FSSP-P);
			\path[->] (best) edge (R-FSSP-O);
			\path[->] (truth) edge (R-FSSP-A);
			
			\node[] (U-FSSP-P) at (4, -6) {\emph{U-FSSP-P}};
			\node[] (U-FSSP-O) at (4, -7) {\emph{U-FSSP-O}};
			\node[] (U-FSSP-A) at (4, -8) {\emph{U-FSSP-A}};
			\path[->, dashed, my-gray-light] (U-FSSP) edge (U-FSSP-P);
			\path[->] (worst) edge (U-FSSP-P);
			\path[->] (best) edge (U-FSSP-O);
			\path[->] (truth) edge (U-FSSP-A);
		\end{tikzpicture}
		\caption{Explanation of first-stage strategy-proofness concepts.}
		\label{fig:taxonomyFSSP}
	\end{figure}
	
	\begin{definition}[First-Stage Strategy-Proofness]
		\index{First-stage strategy-proofness}
		For a given completion principle, a pair $\tuple{\shortRule, \pbRule}$ consisting of a shortlisting rule $\shortRule$ and an allocation rule $\pbRule$ is said to be \emph{restricted-FSSP} (R-FSSP) with respect to a given type of manipulation if for every shortlisting instance $\tuple{\conceivProjSet, c, b}$, every awareness profile $\awareStruct = (C_1, \ldots, C_n)$, every shortlisting profile $\shortProfile = (P_1, \ldots, P_n)$ where $P_{i'} \subseteq C_{i'}$ for all $i' \in \agentSet$, and every agent $i \in \agentSet$, there is no $P_i' \subseteq C_i$ such that submitting $P_i'$ instead of $\idealSet_i(C_i)$ is a successful manipulation for $i$.
		
		In case $P_i' \subseteq C_i \cup \bigcup \shortProfile$ and we consider $\idealSet_i(C_i \cup \bigcup \shortProfile)$ instead of $\idealSet_i(C_i)$ in the above, we say that $\tuple{\shortRule,\pbRule}$ is \emph{unrestricted-FSSP} (U-FSSP).
	\end{definition}
	
	\noindent 
	Thus, in the unrestricted case, agents are assumed to \emph{first} gain access to everyone's proposals and \emph{then} decide whether or not to vote truthfully.
	
	We introduce some further abbreviations. Let FSSP-P stand for FSSP with respect to pessimistic manipulation attempts, FSSP-O for FSSP with respect to optimistic manipulation attempts, and FSSP-A for FSSP with respect to anticipatory manipulation attempts.
	We have thus introduced six different FSSP concepts in total. A simplified overview is given in Figure~\ref{fig:taxonomyFSSP} to clarify everything.
	
	\medskip
	
	It should be clear, at least from the text around the definitions that there are some links between the different FSSP concepts we have introduced.
	The following result summarises how the different notions introduced relate to each other, where $\mathfrak{X}$ implying $\mathfrak{X}'$ means that any pair $\tuple{\shortRule, \pbRule}$ satisfying $\mathfrak{X}$ also satisfies $\mathfrak{X}'$.
	
	\begin{proposition}
		\label{prop:FSSPimplications}
		The following implications hold for any given completion principle:
		\begin{itemize}
			\item R-FSSP-O implies R-FSSP-A and R-FSSP-P.
			\item U-FSSP-O implies U-FSSP-A and U-FSSP-P.
			\item R-FSSP implies U-FSSP for all types of manipulation.
		\end{itemize}
	\end{proposition}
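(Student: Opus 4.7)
The plan splits into two genuinely different arguments, one for the first two bullets and one for the third. For the first two, I would simply unfold the quantifiers in the three definitions of successful manipulation. A successful anticipative manipulation $P_i'$ is, by definition, one that produces a strict preference for the specific pair $(\approvalProfile, \approvalProfile') = (\topProfile(\shortlistedProj), \topProfile(\shortlistedProj'))$, and this is already enough to witness the existential in the definition of a successful optimistic manipulation; similarly, a successful pessimistic manipulation explicitly includes \emph{some} pair on which strict preference holds, and so is again a successful optimistic manipulation. Contrapositively, the absence of successful optimistic manipulations implies the absence of successful anticipative and pessimistic ones. Since nothing in this step refers to the awareness sets or to the choice of preference model, the same argument yields R-FSSP-O $\Rightarrow$ R-FSSP-A, R-FSSP-P and U-FSSP-O $\Rightarrow$ U-FSSP-A, U-FSSP-P for both the overlap and the cost preference models.

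For the third bullet, I would argue by contrapositive, exploiting the fact that the awareness profile is universally quantified in the definitions. Suppose $\tuple{\shortRule, \allocRule}$ fails U-FSSP for some fixed manipulation type: then there exist a shortlisting instance $I_1 = \tuple{\projSet, c, \budgetLimit}$, an awareness profile $\awareStruct = (C_1, \ldots, C_n)$, a shortlisting profile $\shortProfile$ with $P_{i'} \subseteq C_{i'}$ for all $i'$ and with $P_i = \idealSet_i(C_i \cup \bigcup \shortProfile)$, an agent $i \in \agentSet$, and a proposal $P_i' \subseteq C_i \cup \bigcup \shortProfile$ which constitutes a successful U-manipulation. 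I would then define a modified awareness profile $\awareStruct'$ by $C'_i := C_i \cup \bigcup \shortProfile$ and $C'_{i'} := C_{i'}$ for $i' \neq i$. Under $\awareStruct'$, the profile $\shortProfile$ is still admissible (since $P_i \subseteq C_i \subseteq C'_i$ and $P_{i'} \subseteq C_{i'} = C'_{i'}$ otherwise), the R-baseline $\idealSet_i(C'_i)$ coincides with the U-baseline $\idealSet_i(C_i \cup \bigcup \shortProfile)$, and $P_i' \subseteq C'_i$ is a legal R-deviation. Since the definition of successful manipulation itself only involves $I_1$, $\shortProfile$, $P_i'$, the two induced shortlists, and the preference $\prefeq_{\idealSet_i(\shortlistedProj \cup \shortlistedProj')}$---none of which depend on the awareness profile---the same $P_i'$ witnesses a successful R-manipulation of the same type under $\awareStruct'$, contradicting R-FSSP.

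The only genuinely delicate point in the proof is this awareness-rewriting trick: one must verify that enlarging $C_i$ neither breaks the admissibility of the original shortlisting profile nor shifts the baseline ideal set against which success is measured. Both verifications are immediate once set out, but they rely on the fact that the awareness set itself is quantified over in the definition of FSSP, so that choosing a larger $C'_i$ is legal and simultaneously forces the R- and U-baselines to agree.
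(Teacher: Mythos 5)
Your proof is correct and follows essentially the same route as the paper: the first two bullets are immediate once the quantifiers in the three notions of successful manipulation are unfolded, and the third bullet is exactly the paper's observation that a U-manipulation scenario is an R-manipulation scenario in which the manipulator is aware of all proposed projects. Your choice $C_i' = C_i \cup \bigcup \shortProfile$ is in fact a slightly more careful rendering of the paper's shorthand $C_i = \bigcup \shortProfile$, since it also covers deviations $P_i'$ using projects from the original $C_i$ that were not proposed, but this is a refinement of the same argument rather than a different one.
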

	
	\begin{proof}
		The first two claims are immediately derived from the relevant definitions.
		To see that the last of these claims is also true, observe that U-FSSP is a special case of R-FSSP, namely when the manipulator can conceive of all the proposed projects, \textit{i.e.}, when $C_i = \bigcup \shortProfile$.
	\end{proof}
	
	\noindent Interestingly, the link between pessimistic and anticipatory manipulations is not clear. Although if a successful pessimistic manipulation exists it ensures that an anticipative manipulation would lead to a weakly better outcome, nothing guarantees that this outcome would be strictly better for the manipulator.
	
	\subsection{Awareness-Restricted Manipulation}
	\label{sec:RestrictedFSSP}
	
	We start by proving an impossibility theorem stating that no pair of reasonable rules can be first-stage strategy-proof when manipulators are restricted to their awareness sets.
	By ``reasonable rule'' we mean a non-wasteful shortlisting rule, followed by a \emph{determined} allocation rule.
	
	\begin{definition}[Determined]
		\index{Determined rule}
		An allocation rule $\pbRule$ is \emph{determined} if, for every allocation instance $I = \tuple{\projSet, c, b}$, and every profile $\profile$, we have $\pbRule(I, \profile) \neq \emptyset$.
	\end{definition}
	
	\begin{theorem}
		\label{thm:Imposibility_R_FSSP}
		Every pair $\tuple{\shortRule, \pbRule}$ of a non-wasteful shortlisting rule $\shortRule$ and a determined allocation rule $\pbRule$ is neither R-FSSP-P nor R-FSSP-A (and thus also not R-FSSP-O), for any completion principle that is top-first.
	\end{theorem}
	
	\begin{proof}
		We provide a proof for R-FSSP-P, but the same proof also goes through for R-FSSP-A. The claim for R-FSSP-O then follows from Proposition~\ref{prop:FSSPimplications}.
		
		Let $I = \tuple{\conceivProjSet, c, b}$ be the shortlisting instance with two conceivable projects $p_1$ and $p_2$, both of cost 1, and a budget limit $b = 1$. Suppose there are two agents. The preferences of the first agent are such that $p_2 \rhd_1 p_1$, their awareness set is $C_1 = \{p_1\}$. For the second agent, we have $p_1 \rhd_2 p_2$, and $C_2 = \{p_2\}$. Overall, each agent is aware only of the project they like less. The truthful shortlisting profile is then $\shortProfile = (\{p_1\}, \{p_2\})$.
		
		Assuming that $\shortRule$ is non-wasteful, we know that $|\shortRule(I, \shortProfile)| \geq 1$. There are thus three possible cases for $\shortRule(I, \shortProfile)$: to shortlist either just $p_1$, just $p_2$, or both $p_1$ and $p_2$. Let us go through each of them independently.
		
		In case $\shortRule(I, \shortProfile) = \{p_1\}$, whichever way the agents vote in the allocation stage, as $\pbRule$ is assumed to be determined, the final budget allocation must be $\{p_1\}$.
		Now, if agent~1 manipulates by not proposing any project during the first stage, only project $\{p_2\}$ will get shortlisted (and this has to happen since $\shortRule$ is non-wasteful). In that case, $\{p_2\}$ will also be the final budget allocation, given that $\pbRule$ is determined. Since $\{p_2\}$ is the ideal point of agent~1 for the set of projects $\{p_1, p_2\}$, they would strictly prefer $\{p_2\}$ over $\{p_1\}$ for any completion principle that is top-first. So agent~1 has an incentive to pessimistically manipulate.
		
		The case of $\shortRule(I, \shortProfile) = \{p_2\}$ is perfectly analogous to the previous one: the final budget allocation under truthful behaviour would be $\{p_2\}$, but then agent~2 has an incentive to pessimistically manipulate by not submitting $p_2$ during the first stage so that the final outcome would be $\{p_1\}$.
		
		Finally, consider the case $\shortRule(I, \shortProfile) = \{p_1, p_2\}$. Suppose the final budget allocation is $\{p_1\}$ in case both agents vote truthfully. Then, just as in the first case, agent~1 has an incentive to submit an empty set of proposals instead, as that guarantees a final budget allocation of $\{p_2\}$. In the analogous case where the final budget allocation is $\{p_2\}$, agent~2 would pessimistically manipulate.
		
		Overall, there always is an agent who has an incentive to pessimistically manipulate. $\tuple{\shortRule, \pbRule}$ is thus not R-FSSP-P.
	\end{proof}
	
	\noindent
	Note that the scenario used in the proof shows that unrestricted-FSSP does \emph{not} imply restricted-FSSP. Indeed, under U-FSSP, no agent would have an incentive to manipulate in this scenario, as they would have all the information they need to submit an optimal truthful proposal.
	
	Regarding the specific shortlisting rules we have introduced, we can now derive the following corollary.
	
	\begin{corollary}
		Let $\pbRule$ be an allocation rule that is exhaustive, $k \geq 2$, and $\delta$ an arbitrary distance over $\conceivProjSet$. Then, none of the pairs $\tuple{\reprShortRule_k, \pbRule}$,  $\tuple{\medianShortRule_{k, \delta}}$ or $\tuple{\nominationShortRule, \pbRule}$ are R-FSSP-P, R-FSSP-A, or R-FSSP-O, for any completion principle that is top-first.
	\end{corollary}
	
	\begin{proof}
		The proof is immediately derived from Theorem~\ref{thm:Imposibility_R_FSSP} and the fact that the relevant shortlisting rules are non-wasteful (Propositions~\ref{prop:Axioms_Nom_Short_Rule}, \ref{prop:Axioms_Repr_Short_Rule} and~\ref{prop:Axioms_Median_Short_Rule}).
	\end{proof}
	
	\subsection{Unrestricted Manipulation}
	
	We now turn to the case where the manipulator gains awareness by looking at the projects already submitted for the first stage.
	
	\medskip
	
	Let us start with the nomination rule. We will show that it is immune to pessimistic manipulation when used with allocation rules that are \emph{unanimous}, a new axiom we introduce below.
	
	\begin{definition}[Unanimity]
		\index{Unanimity}
		An allocation rule $\pbRule$ is \emph{unanimous} if, for every allocation instance $I = \tuple{\projSet, c, b}$ and every feasible subset of projects $A \in \allocSet(I)$, it is the case that for the profile $\profile = (A, \ldots, A)$, we have:
		\[\pbRule(I, \profile) \supseteq A.\]
	\end{definition}
	
	\noindent This axioms states that if every agent submits the same feasible ballot, then the set of projects in this ballot should be part of the outcome. This is a rather weak axiom and every allocation rule we have defined satisfies it.
	
	Let us now state our result for the nomination shortlisting rule.
	
	\begin{proposition}
		\label{prop:NomShortRule_UFSSPP}
		The pair $\tuple{\nominationShortRule, \pbRule}$ where $\pbRule$ is an allocation rule that is unanimous is U-FSSP-P for every completion principle that is top-first.
	\end{proposition}
	
	\begin{proof}
		Let $I = \tuple{\conceivProjSet, c, b}$ be a shortlisting instance, $\awareStruct$ an awareness profile, 
		and $\shortProfile$ a shortlisting profile. Consider an agent $i^\star \in \agentSet$ and let $P_{i^\star} = \idealSet_{i^\star}(C_{i^\star} \cup \bigcup \shortProfile_{-i^\star})$. Denote $\projSet = \nominationShortRule(I, (\shortProfile_{-i^\star}, P_{i^\star}))$, and observe that $P_{i^\star} \subseteq \projSet$ because of the definition of $\nominationShortRule$. Moreover, from the definition of \nominationShortRule{}, we know that if agent ${i^\star}$ submits $P_{i^\star}'$ instead of $P_{i^\star}$, with $P_{i^\star}' \neq P_{i^\star}$ the shortlist will become $\projSet' = P_{i^\star}' \cup \left(\bigcup_{i \in \agentSet \setminus \{{i^\star}\}}P_{i}\right)$. This implies that:
		\begin{align}
			\projSet \cap P_{i^\star} \supseteq \projSet' \cap P_{i^\star}, \label{eq:Nom_Short_Rule_U_FSSP_Line1}
		\end{align}
		keeping in mind that $P_{i^\star}$ is the ideal set of $i^\star$ for $C_{i^\star} \cup \bigcup \shortProfile$. Thus, under $(\shortProfile_{-i^\star}, P_{i^\star}')$, it cannot be that more projects from $\idealSet_{i^\star}(C_{i^\star} \cup \bigcup \shortProfile)$ are shortlisted than under $\shortProfile$. 
		
		We focus on the second stage now. Consider first the profile $\profile$ in which all agents submit $\idealSet_{i^\star}(\projSet)$. Clearly, submitting $\idealSet_{i^\star}(\projSet)$ is a best response for $i^\star$ in this case, so $A_{i^\star}^\star(\tuple{\projSet, c, b}, \profile)) = \idealSet_{i^\star}(\projSet)$. Since $\pbRule$ is unanimous, we thus have $\pbRule(\tuple{\projSet, c, b}, \profile) = \idealSet_{i^\star}(\projSet)$. Now, since $\projSet \cup \projSet' \subseteq C_{i^\star} \cup \bigcup \shortProfile$ and $P_{i^\star} \subseteq \projSet$, we know that:
		\[\idealSet_{i^\star}(\projSet \cup \projSet') =  P_{i^\star} = \idealSet_{i^\star}(\projSet).\]
		Given that we assumed the completion principle to be top-first, it is clear that no budget allocation $\pi \in \allocSet(\tuple{\projSet', c, b})$ will be strictly preferred to $\pbRule(\tuple{\projSet, c, b}, \profile)$ by $i^\star$. This directly implies that submitting $P_{i^\star}'$ cannot be a successful pessimistic manipulation for $i^\star$.
	\end{proof}
	
	Because anticipative manipulation is defined for one very specific profile of the second stage, it is harder to get general results. Still, we can show that any pair consisting of the nomination shortlisting rule $\nominationShortRule$ and one of the allocation rules we have introduced satisfy U-FSSP-A.
	
	\begin{proposition}
		\label{prop:NomShortRule_UFSSPA}
		The pair $\tuple{\nominationShortRule, \pbRule}$ is not U-FSSP-A, and thus not U-FSSP-O, when $\pbRule$ is one of \greWelCard{}, \greWelCost{}, \maxWelCard{}, \maxWelCost{}, \seqPhragmen{}, \maximinSupport{}, \mesSat{$\cardSatisfaction$}, or \mesSat{$\costSatisfaction$}, for every completion principle that satisfies cost-neutral monotonicity.
	\end{proposition}
	
	\begin{proof}
		Recall our initial example presented in Section~\ref{sec:initialEx}. In case agents submit their proposal for the first stage truthfully, the shortlist under $\nominationShortRule$ is $\projSet = \conceivProjSet = \{p_1, \ldots, p_9\}$. In this case, the truthful profile for the allocation stage is:
		\[\profile = (\{p_1, p_4, p_5\}, \{p_1, p_2, p_6\}, \{p_1, p_2, p_7\}, \{p_1, p_3, p_8\}, \{p_1, p_3, p_9\}).\]
		For the allocation instance $I = \tuple{\projSet, c, b}$, the different rules we are considering produce the following outcomes:
		\begin{gather*}
			\begin{align*}
				\greWelCard(I, \profile) & = \greWelCost(I, \profile) = \maxWelCard(I, \profile) = \maxWelCost(I, \profile) \\ & = \seqPhragmen(I, \profile) = \maximinSupport(I, \profile) = \{p_1, p_2, p_3\},
			\end{align*} \\[0.2em]
			\mesSat{\cardSatisfaction}(I, \profile) = \mesSat{\costSatisfaction}(I, \profile) = \{p_1\}.
		\end{gather*}
		
		Suppose now that agent~1 submits $P_1' = \{p_4, p_5\}$ instead of $P_1 = \{p_1, p_4, p_5\}$ in the shortlisting stage. The shortlist computed by $\nominationShortRule$ then becomes $\projSet' = \conceivProjSet \setminus \{p_1\}$. After the agents have recomputed their ideal set, and assuming that they behave truthfully, the profile in the second stage would then be:
		\[\profile' = (\{p_4, p_5\}, \{p_2, p_4, p_6\}, \{p_2, p_4, p_7\}, \{p_3, p_5, p_8\}, \{p_3, p_5, p_9\}).\]
		For the new allocation instance $I' = \tuple{\projSet', c, b}$, the different rules we are considering produce the following outcomes:
		\begin{gather*}
			\begin{align*}
				\greWelCard(I, \profile') & = \greWelCost(I, \profile') = \maxWelCard(I, \profile') \\ & = \maxWelCost(I, \profile') = \seqPhragmen(I, \profile') \\ & = \maximinSupport(I, \profile') = \{p_2, p_4, p_5\},
			\end{align*} \\[0.2em]
			\mesSat{\cardSatisfaction}(I, \profile') = \mesSat{\costSatisfaction}(I, \profile') = \{p_4, p_5\}.
		\end{gather*}
		
		Let us now check that $P_1'$ is a successful anticipative manipulation for agent~1. Their ideal set across the two scenarios is $\idealSet_1(\projSet \cup \projSet') = \{p_1, p_4, p_5\}$. Thus in the first case only $p_1$ is in the intersection of the outcomes of the rules with $\idealSet_1(\projSet \cup \projSet')$. In the second case---when agent~1 submits $P_1'$---this intersection includes both $p_4$ and $p_5$. Given that we assumed the completion principle be cost-neutral monotonic, the anticipative manipulation of agent~1 is thus successful.
		
		The statement for U-FSSP-A then follows from Proposition~\ref{prop:FSSPimplications}.
	\end{proof}
	
	Moving on to other shortlisting rules, we can show that they are not immune to manipulation. We will prove that this is the case when paired with either \emph{unanimous} or \emph{determined} allocation rules.
	
	We first, prove this for the $k$-equal-representation shortlisting rule.
	
	\begin{proposition}
		\label{prop:reprShortRuleFSSP}
		For all $k \in \mathbb{N}_{> 0}$, the pair $\tuple{\reprShortRule_k, \pbRule}$ where $\pbRule$ is a unanimous allocation rule, is neither U-FSSP-P nor U-FSSP-O, for any completion principle that satisfies top-necessity and top-sufficiency.
		
		Moreover, if $\pbRule$ is determined, then the pair $\tuple{\reprShortRule_1, \pbRule}$ is not U-FSSP-A for any completion principle that satisfies top-sufficiency.
	\end{proposition}
	
	\begin{proof}
		We first prove the claim for $k = 1$ and then explain how to generalise to any $k \in \mathbb{N}_{>0}$ (only for U-FSSP-P and U-FSSP-O). Let $I = \tuple{\conceivProjSet, c, b}$ be a shortlisting instance with $\conceivProjSet = \{p_1, \ldots p_4\}$, $c(p_2) = 2$, $c(p) = 1$ for all $p \in \conceivProjSet \setminus \{p_2\}$, and $b = 2$. We consider three agents with the following preferences.
		\begin{center}
			\begin{tabular}{lccc}
				& \begin{tabular}{@{}c@{}}\textbf{Preferences over} \\ \textbf{the Projects}\end{tabular} & \textbf{Awareness set $C_i$} & \begin{tabular}{@{}c@{}}\textbf{Ideal Set} \\ \textbf{Based on $C_i$}\end{tabular} \\
				\midrule
				\textbf{Agent 1} & $p_3 \rhd p_4 \rhd p_2 \rhd p_1$ & $\{p_1, p_2, p_3, p_4\}$ & $\{p_3, p_4\}$ \\
				\textbf{Agent 2} & $p_1 \rhd p_2 \rhd p_3 \rhd p_4$ & $\{p_1, p_2\}$ & $\{p_1, p_3\}$ \\
				\textbf{Agent 3} & $p_2 \rhd p_1 \rhd p_3 \rhd p_4$ & $\{p_2\}$ & $\{p_2\}$
			\end{tabular}
		\end{center}
		
		\noindent Consider the $1$-equal-representation shortlisting rule. Under the truthful profile $\shortProfile = (\{p_3, p_4\}, \{p_1, p_2\}, \{p_2\})$, the shortlist would be $\projSet = \{p_2\}$.
		
		Assume now that agent~1 submits $\{p_1, p_3\}$ instead of $\{p_3, p_4\}$, leading to the shortlisting profile $\shortProfile' = (\{p_3, p_4\})$. Then, because of the tie-breaking mechanism, the outcome of the first stage becomes $\projSet' = \{p_1, p_3\}$.
		
		The ideal set of agent~1 across both scenarios is $\idealSet_1(\projSet \cup \projSet') = \{p_1, p_3\} = \projSet'$ which implies that $\idealSet_1(\projSet \cup \projSet') \cap \projSet = \emptyset$. Thus, we have the following two facts:
		\begin{itemize}
			\item $\pi \cap \idealSet_1(\projSet \cup \projSet') = \emptyset$ for every $\pi \in \allocSet(\tuple{\projSet, c, b})$;
			\item $\pi' \cap \idealSet_1(\projSet \cup \projSet') \neq \emptyset$ for every $\pi' \in \allocSet(\tuple{\projSet', c, b})$ such that $\pi' \neq \emptyset$.
		\end{itemize}
		So every budget allocation in $\allocSet(\tuple{\projSet', c, b})$ that is non-empty will be strictly preferred by agent~1 to all of the ones in $\allocSet(\tuple{\projSet, c, b})$ (since the completion principle is top-sufficient). Moreover, agent~1 is indifferent between the empty budget allocation and any one from $\allocSet(\tuple{\projSet', c, b})$ (since the completion principle is top-necessary). Given that $\pbRule$ is unanimous, any budget allocation from $\allocSet(\tuple{\projSet', c, b})$ can be achieved (by the corresponding unanimous profile). This means that agent~1's manipulation is pessimistically successful.
		
		In case $\pbRule$ is determined, for any profile $\profile$ for the allocation stage, we have that $\pbRule(\tuple{\projSet, c, b}, \profile) \cap \idealSet_1(\projSet \cup \projSet') \neq \emptyset$. This proves that agent~1's manipulation would also a successful anticipative manipulation, given a completion principle that is top-sufficient.
		
		We can generalise this to all $k > 1$ for the case of U-FSSP-P. To do so, add $3(k - 1)$ agents in groups of 3. Each group can conceive and approve of two new projects. It is easy to check that all the new projects will always be shortlisted, so we are back to the scenario above.
		
		Note that we only talked about U-FSSP-P and U-FSSP-A. The result for U-FSSP-O is immediately derived from Proposition~\ref{prop:FSSPimplications}.
	\end{proof}
	
	We finally consider the case of the median-based shortlisting rules.
	
	\begin{proposition}
		\label{prop:medianShortRuleFSSP}
		Let $\delta$ be the Euclidean distance over $\mathbb{R}^2$, for all $k \in \mathbb{N}_{> 0}$, the pair $\tuple{\medianShortRule_{k, \delta}, \pbRule}$, where $\pbRule$ is a unanimous allocation rule, is neither U-FSSP-P nor U-FSSP-O, for any completion principle that satisfies top-necessity and top-sufficiency.
		
		Moreover, if $\pbRule$ is determined, then the pair $\tuple{\medianShortRule_{1, \delta}, \pbRule}$ is not U-FSSP-A, for any completion principle that satisfies top-sufficiency.
	\end{proposition}
	\begin{proof}
		We first prove the claim for $k = 1$ and then explain how to generalise it to all $k \in \mathbb{N}_{>0}$ (only for U-FSSP-P and U-FSSP-O). Consider the shortlisting instance $I = \tuple{\conceivProjSet, c, b}$ with $\conceivProjSet = \{p_1, \ldots, p_6\}$, all projects having cost~1, and a budget limit of $b = 3$. Suppose the distance~$\delta$ is the usual distance in the plane, with the projects being positioned as in the figure below.
		
		\begin{center}
			\begin{tikzpicture}
				\draw[step=1.0,color=black!10,thin] (-2, -1) grid (4, 1);
				
				\node[label={[label distance=-1em]135:$p_1$}] (d1) at (0, 1) {$\bullet$};
				\node[label={[label distance=-1em]205:$p_2$}] (d2) at (0, -1) {$\bullet$};
				\node[label={[label distance=-0.2em]180:$p_3$}] (d3) at (-1.73, 0) {$\bullet$};
				\node (d4) at (-0.577, 0) {$\bullet$};
				\node at (-0.777, -0.25) {$p_4$};
				\node[label={[label distance=-0.2em, color=black!60]0:$p_5$}, color=black!60] (d5) at (4, 0) {$\bullet$};
				\node[label={[label distance=-0.2em]0:$p_6$}] (d6) at (4, 1) {$\bullet$};
				\node[label={[label distance=-0.2em]0:$p_7$}] (d7) at (4, -1) {$\bullet$};
				
				\path[<->] (d1) edge node[right] {2} (d2);
				\path[<->] (d1) edge node[above left] {2} (d3);
				\path[<->] (d2) edge node[below left] {2} (d3);
				
				\path[<->] (d1) edge node[] {} (d4);
				\path[<->] (d2) edge node[] {} (d4);
				\path[<->] (d3) edge node[] {} (d4);
				\node[rotate = 30] at (-0.77, 0.3) {\tiny $2/\sqrt{3}$};
				
				\path[<->, color=black!60] (d1) edge node[] {} (d5);
				\node[color=black!60] at (2, 0) {$\sqrt{17}$};
				\path[<->] (d1) edge node[above] {4} (d6);
				\path[<->] (d1) edge node[below left] {} (d3);
				\path[<->] (d2) edge node[below] {4} (d7);
				\path[<->] (d2) edge node[above left] {} (d3);
				\path[<->, color=black!60] (d2) edge node[right] {} (d5);
				
				\path[<->, color=black!60] (d5) edge node[right, color=black!60] {1} (d6);
				\path[<->, color=black!60] (d5) edge node[right, color=black!60] {1} (d7);
			\end{tikzpicture}
		\end{center}
		
		We assume that two agents are involved in the process. The first agent is such that $p_1 \rhd_1 p_2 \rhd_1 p_3 \rhd_1 p_5 \rhd_1 \ldots$ and $C_1 = \{p_1, p_2, p_3, p_5\}$. The second agent is such that $p_4 \rhd_2 p_6 \rhd_2 p_7 \rhd_2 \ldots$ and $C_2 = \{p_4, p_6, p_7\}$. The truthful shortlisting profile is then $\shortProfile = (\{p_1, p_2, p_3\}, \{p_4, p_6, p_7\})$. All projects have thus been submitted to the first stage, except for $p_5$. The shortlisting rule $\medianShortRule_{1, \delta}$ would thus consider the following three clusters: $\{p_1, p_2, p_3, p_4\}$, $\{p_6\}$, and $\{p_7\}$, and the shortlist would be $\projSet = \{p_4, p_6, p_7\}$.
		
		Now, assume that agent~1 submits $\{p_1, p_2, p_5\}$ instead of $\{p_1, p_2, p_3\}$, leading to the shortlisting profile $\shortProfile' = (\{p_1, p_2, p_5\}, \{p_4, p_6, p_7\})$. Then, the clusters will be $\{p_1\}$, $\{p_2, p_4\}$, and $\{p_5, p_6, p_7\}$ (for a suitable tie-breaking between Voronoï partitions). The shortlist would then be $\projSet' = \{p_1, p_2, p_5\}$.
		
		Interestingly, we have $\idealSet_1(\projSet \cup \projSet') = \{p_1, p_2, p_5\} = \projSet'$.
		We have thus reached a similar construction as the one in the proof of Proposition~\ref{prop:reprShortRuleFSSP}. Every non-empty budget allocation in $\allocSet(\tuple{\projSet', c, b})$ is strictly preferred by agent~1 to any budget allocation in $\allocSet(\tuple{\projSet, c, b})$. The empty budget allocation is weakly preferred by agent~1 to any budget allocation in $\allocSet(\tuple{\projSet, c, b})$. Given that $\pbRule$ is unanimous, agent~1's manipulation is pessimistically successful. The same applies in the case of U-FSSP-A when $\pbRule$ is determined.
		
		We can extend this to all $k > 1$ for the case of U-FSSP-P. To do so, add $k - 1$ agents, all knowing and approving of three new projects of cost~1. All the new projects are placed uniformly on a circle with centre $p_4$ and a radius large enough so that all new projects will be in their own cluster. Then all the new projects will be shortlisted and we are back to the original case for $k = 1$.
		
		Note that we only talked about U-FSSP-P and U-FSSP-P. The result for U-FSSP-O is immediately derived from Proposition~\ref{prop:FSSPimplications}.
	\end{proof}
	\noindent Note that the previous statement can be generalised to many distances.
	
	\medskip
	
	Interestingly both of the above statements are about unrestricted FSSP, though the successful manipulations presented in the proofs only make use of the awareness set of the manipulator. That is because in both proofs, the manipulator is initially aware of their top projects and manipulates by restraining from submitting some. This hints at some potentially interesting refinements of the FSSP axioms that can be worth studying. For instance, by restraining the type of ballots a manipulator can submit, or said differently, the type of strategic behaviours they can engage into.
	\begin{table}
		\centering
		\setlength\aboverulesep{5pt}
		\setlength\belowrulesep{5pt}
		\newcommand{\spacingrule}{\arrayrulecolor{white}\specialrule{1pt}{1pt}{1pt}\arrayrulecolor{black}}
		\begin{tabular}{rccc}
			\toprule
			& $\tuple{\nominationShortRule, \pbRule}$ & $\tuple{\reprShortRule_k, \pbRule}$ & $\tuple{\medianShortRule_{k, \delta}, \pbRule}$ \\
			
			\midrule
			
			\begin{tabular}{@{}r@{}}R-FSSP-P, R-FSSP-A, \\ and R-FSSP-O\end{tabular} & \xmark{} & \xmark{} {\scriptsize $(k \geq 2)$} & \xmark{} {\scriptsize $(k \geq 2)$} \\ \spacingrule
			\footnotesize Result Statement & \footnotesize Theorem~\ref{thm:Imposibility_R_FSSP} & \footnotesize Theorem~\ref{thm:Imposibility_R_FSSP} & \footnotesize Theorem~\ref{thm:Imposibility_R_FSSP} \\ \spacingrule
			\footnotesize Condition on $\pbRule$ & \footnotesize Determined & \footnotesize Determined & \footnotesize Determined \\ \spacingrule
			\footnotesize \begin{tabular}{@{}r@{}}Condition on the \\Completion Principle\end{tabular} & \footnotesize Top-First & \footnotesize Top-First & \footnotesize Top-First \\
			
			\midrule
			
			U-FSSP-P & \cmark & \xmark{} {\scriptsize $(k \geq 1)$} & \xmark{} {\scriptsize $(k \geq 1)$} \\ \spacingrule
			\footnotesize Result Statement & \footnotesize Proposition~\ref{prop:NomShortRule_UFSSPP} & \footnotesize Proposition~\ref{prop:reprShortRuleFSSP} & \footnotesize Proposition~\ref{prop:medianShortRuleFSSP} \\ \spacingrule
			\footnotesize Condition on $\pbRule$ & \footnotesize Unanimous & \footnotesize \begin{tabular}{@{}c@{}}Unanimous or \\ Determined\end{tabular} & \footnotesize \begin{tabular}{@{}c@{}}Unanimous or \\ Determined\end{tabular} \\ \spacingrule
			\footnotesize \begin{tabular}{@{}r@{}}Condition on the \\Completion Principle\end{tabular} & \footnotesize Top-First & \footnotesize \begin{tabular}{@{}c@{}}Top-Necessity and \\Top-Sufficiency\end{tabular} & \footnotesize \begin{tabular}{@{}c@{}}Top-Necessity and \\Top-Sufficiency\end{tabular} \\
			
			\midrule
			
			U-FSSP-O & \xmark{} & \xmark{} {\scriptsize $(k \geq 1)$} & \xmark{} {\scriptsize $(k \geq 1)$} \\ \spacingrule
			\footnotesize Result Statement & \footnotesize Proposition~\ref{prop:NomShortRule_UFSSPA} & \footnotesize Proposition~\ref{prop:reprShortRuleFSSP} & \footnotesize Proposition~\ref{prop:medianShortRuleFSSP} \\ \spacingrule
			\footnotesize Condition on $\pbRule$ & \footnotesize \begin{tabular}{@{}c@{}}Only for \\ Specific Rules\end{tabular} & \footnotesize \begin{tabular}{@{}c@{}}Unanimous or \\ Determined\end{tabular} & \footnotesize \begin{tabular}{@{}c@{}}Unanimous or \\ Determined\end{tabular} \\ \spacingrule
			\footnotesize \begin{tabular}{@{}r@{}}Condition on the \\Completion Principle\end{tabular} & \footnotesize \footnotesize \begin{tabular}{@{}c@{}}Cost-Neutral \\ Monotonicity\end{tabular} & \footnotesize \begin{tabular}{@{}c@{}}Top-Necessity and \\Top-Sufficiency\end{tabular} & \footnotesize \begin{tabular}{@{}c@{}}Top-Necessity and \\Top-Sufficiency\end{tabular} \\
			
			\midrule
			
			U-FSSP-A & \xmark{} & \xmark{} {\scriptsize $(k = 1)$} & \xmark{} {\scriptsize $(k = 1)$} \\ \spacingrule
			\footnotesize Result Statement & \footnotesize Proposition~\ref{prop:NomShortRule_UFSSPA} & \footnotesize Proposition~\ref{prop:reprShortRuleFSSP} & \footnotesize Proposition~\ref{prop:medianShortRuleFSSP} \\ \spacingrule
			\footnotesize Condition on $\pbRule$ & \footnotesize \begin{tabular}{@{}c@{}}Only for \\ Specific Rules\end{tabular} & \footnotesize Determined & \footnotesize Determined \\ \spacingrule
			\footnotesize \begin{tabular}{@{}r@{}}Condition on the \\Completion Principle\end{tabular} & \footnotesize \begin{tabular}{@{}c@{}}Cost-Neutral \\ Monotonicity\end{tabular} & \footnotesize Top-Sufficiency & \footnotesize Top-Sufficiency \\
			\bottomrule
		\end{tabular}
		\caption{Summary of the results on first-stage strategy-proofness for different shortlisting rules. For each FSSP axiom (or set of axioms), we indicate whether it is satisfied or not in general, and under which conditions on the allocation rule $\pbRule$ and the completion principle the results applies. $\pbRule$ is an arbitrary allocation rule, and $\delta$ is the Euclidean distance over $\mathbb{R}^2$.}
		\label{tab:Summary_FSSP}
	\end{table}

	\section{Conclusion}
	\label{sec:Conclusion}
	
	The aim of this paper was to capture more accurately the different stages that occur in real-life PB processes. Starting from the standard model of PB, we introduced a preliminary stage in which voters can submit proposals that will then be shortlisted to form an allocation instance, \textit{i.e.}, an instance of the standard model. We explored two lines of work within this \emph{end-to-end} model: one that relates to the creation of the shortlist, and another one that relates to the interactions between the two stages.
	
	On the shortlisting side, we presented three different shortlisting rules, tailored to fulfil different objectives for the shortlist. The three main objectives we identified are the following: reducing the number of shortlisted projects, representing all the agents in the shortlist, and, avoiding having overly similar projects in the shortlist. The shortlisting rules $\reprShortRule$ and $\medianShortRule$ performs well on the first objective. They are also good candidates for the second and third objective, respectively. They also enjoy interesting axiomatic properties, as we have seen.
	
	On the interaction side, we focused on the strategic behaviour that can emerge during the shortlisting stage, due to the two-stage nature of the process. Taking our time to reflect on what exactly it would mean for an agent---who has in mind the budget allocation determined in the second stage---to engage in strategic behaviour during the first stage, we introduced six notions of \emph{first-stage strategy-proofness}. For each of the shortlisting rules we introduced, we then successively studied under which conditions they are immune to manipulation, or not. Our results are summarised in Table~\ref{tab:Summary_FSSP}. The main take-home message here is that, overall, shortlisting rules are not immune to manipulation, and that there can always be an incentive for an agent not to propose to build a fountain in the centre of the main square.
	
	Overall, this paper is a first step towards the principled investigation of the full PB process. There are still many other features deserving attention. Other types of interaction between the two stages can be investigated, such as devising allocation rules that take into account not only the outcome of the shortlisting stage but also the shortlisting profile itself, to enforce some kind of fairness across the two stages. This leads to the idea of defining and studying single rules for the whole process instead of taking the composition of a shortlisting and an allocation rule. The shortlisting stage can also be studied in terms of efficiency, or more accurately in terms of the loss of efficiency it entails (by not shortlisting all proposals). Concepts such as the distortion---a measure of the loss of efficiency due to the limited information available---\citep{PrRo06} could be used for that purpose.
	
	\bibliographystyle{ACM-Reference-Format}
	\bibliography{PB}
	
\end{document}